\newtheorem{dfn}{Definition}
\newtheorem{fact}{Fact}
\newtheorem{prop}[fact]{Proposition}
\newtheorem{thm}{Theorem}
\newtheorem{cor}[thm]{Corollary}
\newtheorem{lemma}[thm]{Lemma}
\def\beq{\begin{equation}}
\def\eeq{\end{equation}}
\def\m#1{\mskip#1mu}
\def\n#1{\mskip-#1mu}
\def\p#1{\mathop{{\rm Pr}[\m1#1\m1]}}
\def\pr#1#2{\mathop{{\rm Pr}[\m1#1\m1|\m1#2\m1]}}
\def\H#1#2{{H}(#1\m1|\m1#2)}
\def\II#1#2{{I}(#1;#2)}
\def\IC{\mrm{IC}}
\def\D#1#2{D(#1|\n2|#2)}
\newcommand{\tup}[1]{({#1})}
\newcommand{\mbf}[1]{\mathbf{#1}}
\newcommand{\mrm}[1]{\ensuremath{\mathrm{#1}}}
\newcommand{\mcl}[1]{\mathcal{#1}}
\newcommand{\E}{\mathbf{E}}
\newcommand{\AND}{\ensuremath{\mathsf{AND}}}
\newcommand{\DISJ}{\ensuremath{\mathsf{DISJ}}}
\begin{document}

\title{Hellinger volume and number-on-the-forehead\\ communication complexity}
\author{
  Troy Lee\thanks{
    School of Physics and Mathematical Sciences, Nanyang Technological University and 
    Centre for Quantum Technologies.  Supported in part by the Singapore National Research 
    Foundation under NRF RF Award No. NRF-NRFF2013-13 and by a NSF postdoctoral fellowship
    while at Rutgers University.
    postdoctoral fellowship.
    Email:~{\tt troyjlee@gmail.com}.
  }
  \and Nikos Leonardos\thanks{
    {Department of Computer Science, Rutgers University, NJ, USA}
    {Supported in part by NSF under grant CCF 0832787.}
    Email:~{\tt nikos.leonardos@gmail.com}.
  }
  \and Michael Saks\thanks{
    {Mathematics Department, Rutgers University, NJ, USA}
    {Supported in part by NSF under grant CCF 0832787.}
    Email:~{\tt saks@math.rutgers.edu}.
  }
  \and Fengming Wang\thanks{
    {Department of Computer Science, Rutgers University, NJ, USA}
    {Supported in part by NSF under grants CCF 0830133, CCF 0832787, and CCF 1064785.}
    Email:~{\tt fengming@cs.rutgers.edu}.
  }
}
\maketitle

\begin{abstract}
Information-theoretic methods have proven to be a very powerful tool 
in communication complexity, in particular giving an elegant proof 
of the linear lower bound for the two-party disjointness function, and tight lower bounds 
on disjointness in the multi-party number-in-the-hand (NIH) model.  
In this paper, we study the applicability of information theoretic methods 
to the multi-party number-on-the-forehead model (NOF), where determining the 
complexity of disjointness remains an important open problem. 

There are two basic parts to the NIH disjointness lower bound: a direct sum 
theorem and a lower bound on the one-bit $\AND$ function using a beautiful 
connection between Hellinger distance and protocols revealed by 
Bar-Yossef, Jayram, Kumar \& Sivakumar \cite{byjks04}.
Inspired by this connection, we introduce the notion of Hellinger volume.  
We show that it lower bounds the
information cost of multi-party NOF protocols and provide a small toolbox that
allows one to manipulate several Hellinger volume terms and lower
bound a Hellinger volume when the distributions involved satisfy certain
conditions. In doing so, we prove a new upper bound on the difference between
the arithmetic mean and the geometric mean in terms of relative entropy.
We then apply these new tools to obtain a lower bound on the
informational complexity of the $\AND_k$ function in the NOF setting.  Finally, we discuss the 
difficulties of proving a direct sum theorem for information cost in the NOF model.  

\bigbreak\noindent{\bf Keywords:}
communication complexity, informational complexity, Hellinger volume,
number-on-the-forehead.
\end{abstract}

\section{Introduction}
One of the most important research areas in communication
complexity is proving lower
bounds in the multi-party number-on-the-forehead (NOF) model.   
The NOF model was introduced in \cite{cfl83}, where it was used to
prove lower bounds for branching programs. Subsequent papers revealed
connections of this model to circuit complexity \cite{bt94,hg90,nisan94,nw91} and proof
complexity \cite{bps05}.  In particular, an explicit function which requires
super-polylogarithmic complexity in the NOF model with polylogarithmically many
players would give an explicit function outside of the circuit complexity class
$\mathsf{ACC}^0$.

Essentially all lower bounds on the general NOF model have been shown using the
discrepancy method following \cite{bns92}. This method has been able to show
lower bounds of $\Omega(n/2^k)$ for explicit functions
\cite{bns92,ct93,raz00,fg05}. For the disjointness function, the plain discrepancy method shows 
poor bounds and a more sophisticated application of  
discrepancy is needed known as the generalized discrepancy method \cite{Kla07, Raz03, LS07, She08}.  The generalized discrepancy method was initially
used to show lower bounds of the form $n^{1/k}/2^{2^k}$ \cite{ls09,CA08} and
$2^{\Omega(\sqrt{\log n}/\sqrt{k})-k}$ \cite{bn09} on the $k$-player NOF complexity of disjointness. 
Recent work of Sherstov in \cite{sherstov2012} and
\cite{sherstov2013} improved the lower bounds to
$\Omega((n/4^k)^{1/4})$ and $\Omega(\sqrt n/2^kk)$, respectively.
A very recent paper of Rao and Yehudayoff \cite{ry2014} gives a simplified
proof of the latter lower bound and also gives a nearly tight $\Omega(n/4^k)$
lower bound for deterministic protocols.  An upper bound of $O(\log^2(n) + k^2n/2^k)$ for the disjointness function 
follows from a beautiful protocol of Grolmusz \cite{gro94}. 

In this paper we are interested in
how information-theoretic methods might be applied to the NOF model.  Information-theoretic 
methods have been very successful in the number-in-the-hand (NIH) multi-party model, 
in particular giving tight lower bounds on the disjointness function.  The first
use of information theory in communication complexity lower bounds can be traced
to \cite{abl96}.  In \cite{cswy01} the notions of information cost and
informational complexity were defined explicitly. Building on their work, a very
elegant information-theoretic framework for proving lower bounds in NIH communication 
complexity was established in \cite{byjks04}. 

In \cite{byjks04} a proof of the linear lower bound for two-party disjointness
was given. The proof has two main stages. In the first stage, a direct-sum
theorem for informational complexity is shown, which says that the
informational complexity of disjointness,
$\DISJ_{n,2}(x,y)=\bigvee_{j=1}^n\AND_2(x_j,y_j)$, is lower bounded by $n$ times
the informational complexity of the binary $\AND_2$ function. 
Although it is not known how to prove such a direct-sum theorem directly for the
classical randomized complexity, Bar-Yossef et al.\ prove it for the
informational complexity with respect to a suitable distribution. A crucial
property of the distribution is that it is over the zeroes of disjointness. At
this point we should point out a remarkable characteristic of the method: even
though the information cost of a protocol is analyzed with respect to a
distribution over zeroes only, the protocol is required to be correct over all
inputs. This requirement is essential
in the second stage, where
a constant lower bound
is proved on the informational complexity of $\AND_2$. 
This is achieved using properties of the Hellinger distance for distributions. 
Bar-Yossef et al.\ reveal a beautiful connection between Hellinger distance and
NIH communication protocols. (More properties of Hellinger distance relative to
the NIH model have been established in \cite{j09}.)

In this work we provide tools for accomplishing the second stage in the NOF
model. We introduce the notion of Hellinger volume of $m\ge2$ distributions and show
that it can be useful for proving lower bounds on informational complexity in
the NOF model, just as Hellinger distance is useful in the NIH model.
However, as we point out in the last section, there are fundamental
difficulties in proving a direct-sum theorem for informational complexity in the
NOF model. Nevertheless, we believe that Hellinger volume and the related tools
we prove, could be useful in an information-theoretic attack on NOF complexity.

A version of this paper was submitted to a journal in 2011, but the refereeing
process has been long delayed. In the meantime there has been some overlapping
independent work by Beame, Hopkins, Hrube\v{s} and Rashtchian \cite{Beame},
including lower bounds for the information complexity of the AND function
similar to those we give in Section~\ref{section5} but for restricted settings,
and 0-information protocols in the ``randomness on the forehead'' model, of the
\enlargethispage\baselineskip
type we give in Section~\ref{section6} but in a more general setting.

\section{Preliminaries and notation}
\paragraph{Hellinger volume}
We introduce the notion of Hellinger volume of $m$ distributions. In the next
section we show that it has properties similar in flavor to the ones of
Hellinger distance.
\begin{dfn}
	The $m$-dimensional Hellinger volume of distributions $p_1,\dots,p_m$ over
	$\Omega$ is
	\[h_m(p_1,\dots,p_m)=
		1-\sum_{\omega\in\Omega}\sqrt[m]{p_1(\omega)\cdots p_m(\omega)}.\]
\end{dfn}
Notice that $h_2(p_1,p_2)$ in the case $m=2$ is the square of the Hellinger
distance between distributions $p_1$ and $p_2$.

The following fact follows from the arithmetic-geometric mean inequality.
\begin{fact}\label{fact.nonneg}
	For any distributions $p_1,\dots,p_m$ over\/ $\Omega$,
	$h_m(p_1,\dots,p_m)\ge0$.
\end{fact}

\paragraph{Random variables and distributions}
We consider discrete probability spaces $(\Omega,\zeta)$, where
$\Omega$ is a finite set and $\zeta$ is a nonnegative valued function
on $\Omega$ summing to 1.
If $(\Omega_1,\zeta_1),\ldots,(\Omega_n,\zeta_n)$
are such spaces, their product is the space $(\Lambda,\nu)$,
where $\Lambda=\Omega_1\times\cdots\times\Omega_n$ is the
Cartesian product of sets, and for $\omega=(\omega_1,\ldots,\omega_n)\in\Lambda$,
$\nu(\omega)=\prod_{j=1}^n\zeta_j(\omega_j)$.  In the case that all of
the $(\Omega_i,\zeta_i)$ are equal to a common space $(\Omega,\zeta)$ we write
$\Lambda=\Omega^n$ and $\nu=\zeta^n$.

We use uppercase for random variables, as in $\mbf{Z},D$, and write in bold
those that represent vectors of random variables. For a variable $X$ with range
$\mcl{X}$ that is distributed according to a probability distribution $\mu$,
i.e.\ $\Pr[X=x]=\mu(x)$, we write $X\sim\mu$. If $X$ is uniformly distributed in
$\mcl{X}$, we write $X\in_R\mcl{X}$.

\paragraph{Information theory}
Let $X,Y,Z$ be random variables on a common probability space, taking on values,
respectively, from finite sets $\mcl{X},\mcl{Y},\mcl{Z}$.
Let $A$ be any event.
The {\em entropy} of $X$, the {\em conditional entropy of $X$ given
$A$}, and the {\em conditional entropy of $X$ given $Y$} are respectively 
(we use $\log$ for $\log_2$)
\begin{align*}
	H(X)&=-\sum_{x\in\mcl{X}}\Pr[X=x]\cdot\log\Pr[X=x],\\
	\H XA&=-\sum_{x\in\mcl{X}}\Pr[X=x\,|\,A]\cdot\log\Pr[X=x\,|\,A],\\
	\H XY&=\sum_{y\in\mcl{X}}\Pr[Y=y]\cdot\H X{Y=y}.
\end{align*}
We will need the following facts about the entropy. (See
\cite[Chapter~2]{ct06}, for proofs and more details.)
\begin{prop}\label{entropy}
	Let $X,Y,Z$ be random variables.
	\begin{enumerate}
		\item\label{entropy:1} 
			${H}(X)\ge{H}(X\,|\,Y)\ge0$.
		\item\label{entropy:2} 
			If $\mcl{X}$ is the range of $X$, then
			${H}(X)\le\log|\mcl{X}|$.
		\item\label{entropy:3} 
			${H}(X,Y)\le{H}(X)+{H}(Y)$ with equality if and
			only if $X$ and $Y$ are independent. This holds for conditional entropy
			as well.  $\H{X,Y}Z\le\H XZ+\H YZ$ with equality if and only if $X$ and
			$Y$ are independent given $Z$.
	\end{enumerate}
\end{prop}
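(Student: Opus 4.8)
The plan is to reduce all three parts to two elementary facts, thereby avoiding any ad hoc computation: (i) the map $t\mapsto-t\log t$ is nonnegative on $[0,1]$, with the convention $0\log0=0$; and (ii) \emph{Gibbs' inequality}, that for distributions $p,q$ on a common finite set $S$ one has $\sum_{x\in S}p(x)\log\frac{p(x)}{q(x)}\ge0$, with equality if and only if $p=q$. I would prove (ii) once, as a lemma, by applying Jensen's inequality to the concave function $\log$: $\sum_{x}p(x)\log\frac{q(x)}{p(x)}\le\log\sum_{x}p(x)\frac{q(x)}{p(x)}\le\log1=0$, where strict concavity of $\log$ pins down the equality case. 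Everything else is then short bookkeeping.

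With the lemma in hand, Part~\ref{entropy:1} splits in two. The bound $\H XY\ge0$ holds because each $\H X{Y=y}$ is a sum of nonnegative terms $-\Pr[X{=}x\mid Y{=}y]\log\Pr[X{=}x\mid Y{=}y]$ by (i), and $\H XY$ is a nonnegative combination of such quantities. The bound $H(X)\ge\H XY$ holds because the difference rearranges into $H(X)-\H XY=\sum_{x,y}\Pr[X{=}x,Y{=}y]\log\frac{\Pr[X{=}x,Y{=}y]}{\Pr[X{=}x]\Pr[Y{=}y]}$, the relative entropy between the joint law of $(X,Y)$ and the product of its marginals, which is $\ge0$ by (ii). Part~\ref{entropy:2} is the same idea with a uniform comparison distribution: $\log|\mcl X|-H(X)$ equals the relative entropy of the law of $X$ against the uniform law on $\mcl X$, hence is $\ge0$ by (ii); if $X$ is not full support one restricts $S$ to its support, on which the uniform law still carries total mass at most $1$ and the Jensen step is unchanged.

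Part~\ref{entropy:3} follows the same template: $H(X)+H(Y)-H(X,Y)$ is again exactly the relative entropy between the joint law of $(X,Y)$ and the product of its marginals, so it is $\ge0$ by (ii), and the equality clause of (ii) forces it to vanish precisely when the joint law equals the product of the marginals, i.e.\ when $X$ and $Y$ are independent. The conditional version $\H{X,Y}Z\le\H XZ+\H YZ$ is obtained by applying the unconditional inequality to the conditional laws of $(X,Y)$ given each atom $Z{=}z$ and averaging over $z$, with equality if and only if it holds for every $z$ in the support of $Z$, i.e.\ if and only if $X$ and $Y$ are conditionally independent given $Z$. I do not expect a genuine obstacle anywhere; the only thing to be careful about is the bookkeeping around zero-probability atoms and the $0\log0=0$ convention when forming these relative entropies. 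Since the proposition merely collects classical facts, one could equally well just cite \cite[Chapter~2]{ct06}.
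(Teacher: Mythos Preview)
Your proof is correct and standard, but there is nothing to compare it against: the paper does not prove this proposition at all, it simply refers the reader to \cite[Chapter~2]{ct06}. You even anticipated this option in your last sentence.
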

The {\em relative entropy} or {\em divergence} of distributions $P$ and $Q$ over
$\Omega$ is
\[\D PQ=\sum_{x\in\Omega}P(x)\log{P(x)\over Q(x)}.\]

The {\em mutual information} between $X$ and $Y$ is
\[\II XY=H(X)-\H XY=H(Y)-\H YX.\]

\paragraph{Notation}
We write $[n]=\{1,2,\dots,n\}$.  For a sequence $\tup{a_1,\dots,a_n}$
we let, for $j\in[n]$, $a_{<j}=\tup{a_1,\dots,a_{j-1}}$, and
$a^{-j}=(a_1,\dots,a_{j-1},a_{j+1},\dots,a_k)$.
We will denote subsets of $\{0,1\}^k$ as follows: $I=\{0,1\}^k$;
for $j\in[k]$, $I_j$ is the set of points in $I$ such that the $j$-th coordinate
is set to zero, i.e.\ $I_j=\{z\in I\mid z_j=0\}$; $I_{OZ}$ (resp.\ $I_{EZ}$) is
the set of points in $I$ with an odd (resp.\ even) number of zeros.

\paragraph{Communication complexity}
In this work we will be dealing with the multi-party private-coin randomized
number-on-the-forehead communication model, introduced by \cite{cfl83}. There are
$k$ players, numbered $1,\dots,k$, trying to compute a function
$f:\mathcal{Z}\to\{0,1\}$, where
$\mathcal{Z}=\mathcal{Z}_1\times\cdots\times\mathcal{Z}_k$.
On input $z\in\mathcal{Z}$, player $j$ receives input $z_j$ (conceptually,
placed on his forehead), but he has access only to $z^{-j}$.
They wish to determine $f(z)$, by broadcasting messages according to a protocol
$\Pi$. Let the random variable $\Pi(z)$ denote the transcript of the
communication on input $z$ (where the probability is over the random coins of
the players) and $\Pi_\text{out}(z)$ the outcome of the protocol.
We call $\Pi$ a {\em $\delta$-error protocol for $f$} if, for all $z$,
$\Pr[\Pi_\text{out}(z)=f(z)]\ge1-\delta$. The {\em communication cost of
$\Pi$} is $\max|\Pi(z)|$, where the maximum is over all inputs
$z$ and over all coin tosses of the players. The {\em $\delta$-error randomized
communication complexity of $f$}, denoted $R_\delta(f)$, is the cost of the best
$\delta$-error protocol for $f$. (See \cite{kn06} for more details.)

\paragraph{Communication complexity lower bounds via information theory}
The informational complexity paradigm, introduced by
\cite{cswy01}, and used in \cite{ss02,byjks02,cks03,byjks04,jks03},
provides a way to prove lower bounds on communication
complexity via information theory.
We are given a $k$-party function $f$ and we want to show that any
$\delta$-error randomized NOF protocol $\Pi$ for $f$ requires high
communication.
We introduce a probability
distribution over the inputs to the players. We then analyze the 
behavior of $\Pi$ when run on inputs chosen randomly according to the
distribution.  The informational complexity is the mutual information
of the string of communicated bits (the {\em transcript} of $\Pi$)
with the inputs, and provides a lower bound on
the amount of communication.

More precisely,
let $\Omega=(\Omega,\zeta)$ be a probability space over which are defined
random variables $\mbf{Z}=\tup{Z_1,\ldots,Z_k}$ representing the inputs.
The {\em information cost} of a protocol $\Pi$ with respect to $\zeta$
is defined to be $\II{\mbf{Z}}{\Pi(\mbf{Z})}$, where
$\Pi(\mbf{Z})$ is a random variable following the distribution of the
communication transcripts when the protocol $\Pi$ runs on input
$\mbf{Z}\sim\zeta$. The {\em $\delta$-error informational complexity} of
$f$ with respect to $\zeta$, denoted $\IC_{\zeta,\delta}(f)$, is
$\min_\Pi\II{\mbf{Z}}{\Pi(\mbf{Z})}$, where the minimum is over all
$\delta$-error randomized NOF protocols for $f$.
The relevance of informational complexity comes from the following proposition.
\begin{prop}
	$R_\delta(f)\ge\IC_{\zeta,\delta}(f)$.
\end{prop}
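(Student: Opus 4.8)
The plan is to bound, for an arbitrary $\delta$-error NOF protocol $\Pi$, the information cost $\II{\mbf Z}{\Pi(\mbf Z)}$ by the communication cost of $\Pi$, and then specialize to a protocol of minimum cost. The starting point is the elementary inequality
$\II{\mbf Z}{\Pi(\mbf Z)}=H(\Pi(\mbf Z))-\H{\Pi(\mbf Z)}{\mbf Z}\le H(\Pi(\mbf Z))$,
which is immediate from item~\ref{entropy:1} of Proposition~\ref{entropy} since conditional entropy is nonnegative. Hence it suffices to show that $H(\Pi(\mbf Z))\le\max_z|\Pi(z)|$, the right-hand side being (by definition) the communication cost of $\Pi$.

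For this I would use the structure of a communication protocol: the set of possible transcripts is the set of leaves of a binary tree whose depth is at most $c:=\max_z|\Pi(z)|$ (the maximum over inputs and over coin tosses). In particular the transcripts form a prefix-free family of binary strings each of length at most $c$, so by Kraft's inequality there are at most $2^c$ of them; then item~\ref{entropy:2} of Proposition~\ref{entropy} gives $H(\Pi(\mbf Z))\le\log(\#\text{transcripts})\le\log 2^c=c$. (Equivalently, one can invoke the standard fact that the entropy of a prefix-free code is at most its expected codeword length, $H(\Pi(\mbf Z))\le\E[\,|\Pi(\mbf Z)|\,]\le c$, via the Gibbs inequality applied to the sub-distribution $2^{-|\ell|}/\sum_{\ell'}2^{-|\ell'|}$.) Combining the two displayed bounds yields $\II{\mbf Z}{\Pi(\mbf Z)}\le c$ for every $\delta$-error protocol $\Pi$. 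Finally, taking $\Pi^*$ to be a $\delta$-error protocol for $f$ of minimum cost, so that $\mathrm{cost}(\Pi^*)=R_\delta(f)$, and recalling that $\IC_{\zeta,\delta}(f)$ is the minimum of $\II{\mbf Z}{\Pi(\mbf Z)}$ over all $\delta$-error protocols, we obtain $R_\delta(f)=\mathrm{cost}(\Pi^*)\ge\II{\mbf Z}{\Pi^*(\mbf Z)}\ge\IC_{\zeta,\delta}(f)$.

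The only point requiring a little care — the main ``obstacle,'' such as it is — is the passage from $H(\Pi(\mbf Z))$ to the communication cost: one must use that the transcripts of a protocol form a prefix-free set (equivalently, the leaves of a protocol tree) rather than an arbitrary subset of binary strings of length at most $c$, so that their number is bounded by $2^c$ and the $\log$ is bounded by $c$ exactly. Everything else is a direct application of the entropy facts already recorded in Proposition~\ref{entropy}, together with the observation that private coins do not enlarge the set of possible transcripts.
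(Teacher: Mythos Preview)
Your proof is correct and follows essentially the same route as the paper's: bound $\II{\mbf Z}{\Pi(\mbf Z)}\le H(\Pi(\mbf Z))$ via nonnegativity of conditional entropy (Proposition~\ref{entropy}\eqref{entropy:1}), then bound $H(\Pi(\mbf Z))$ by $\log$ of the number of transcripts (Proposition~\ref{entropy}\eqref{entropy:2}), and specialize to an optimal protocol. The only difference is that you spell out the prefix-free/Kraft step to justify $\#\text{transcripts}\le 2^c$, which the paper leaves implicit.
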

\begin{proof}
	For any protocol $\Pi$,
	$\IC_{\zeta,\delta}(f)\le\II{\mbf{X,Y}}{\Pi(\mbf{X,Y})}=
	  H(\Pi(\mbf{X,Y}))-\H{\Pi(\mbf{X,Y})}{\mbf{X,Y}}$.
			Applying in turn parts (1) and (2) of Proposition~\ref{entropy} gives
	$\IC_{\zeta,\delta}(f)\le{H}({\Pi(\mbf{X,Y})})\le R_\delta(f)$. 
\end{proof}

For a collection of distributions $\eta=\{\zeta_1,\dots,\zeta_k\}$, we define
the {\em $\delta$-error informational complexity} of $f$ with respect to $\eta$,
denoted $\IC_{\eta,\delta}(f)$, to be $\E_j[\IC_{\zeta_j,\delta}(f)]$, where $j$
is uniformly distributed over $[k]$.

\paragraph{Remark} This definition of information cost as an average, is
equivalent to the (standard) conditional information cost. We choose this
definition, because we think it makes the exposition cleaner.

\section{An upper bound on the difference between the arithmetic and geometric mean.}
For a nonnegative real sequence $\alpha=(\alpha_1,\dots,\alpha_m)$, let
$A(\alpha)$ and $G(\alpha)$ denote its arithmetic and geometric mean
respectively. That is
\[A(\alpha)=\frac{1}{m} \sum\alpha_j \quad\text{and}\quad G(\alpha)=
	\sqrt[m]{\prod\alpha_j}.\]	

\begin{thm}\label{thm.diff}
	For any distribution $p$ over $[m]$,
	\[A(p)-G(p)\le\ln2\cdot\D pu,\]
	where $u$ is the uniform distribution over $[m]$.
\end{thm}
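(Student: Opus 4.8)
The plan is to express the quantity $A(p)-G(p)$ directly in terms of $p$ and $m$, and compare it to $\D pu$. Writing $p=(p_1,\dots,p_m)$ we have $A(p)=\frac1m\sum_j p_j=\frac1m$, and $G(p)=\bigl(\prod_j p_j\bigr)^{1/m}$, so the left-hand side is
\[
\frac1m-\Bigl(\prod_{j=1}^m p_j\Bigr)^{1/m}.
\]
On the right-hand side, since $u$ is uniform, $\D pu=\sum_j p_j\log(mp_j)=\log m+\sum_j p_j\log p_j=\log m-H(p)$. Converting to natural logarithms, $\ln2\cdot\D pu=\ln m-\sum_j p_j\ln(1/p_j)=\ln m+\sum_j p_j\ln p_j$. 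So after multiplying through by $m$, the claim becomes
\[
1-m\Bigl(\prod_j p_j\Bigr)^{1/m}\le m\ln m+m\sum_j p_j\ln p_j,
\]
i.e. $1-m\,G(p)\le m\bigl(\ln m-H_e(p)\bigr)$ where $H_e$ is entropy in nats.

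The key step is to bound $m\,G(p)$ from below using the entropy. Note $\ln\bigl(m\,G(p)\bigr)=\ln m+\frac1m\sum_j\ln p_j$. I would like to relate $\frac1m\sum_j\ln p_j$ to $\sum_j p_j\ln p_j$. The natural tool is the elementary inequality $1-x\le -\ln x$ for $x>0$ (equivalently $e^{-t}\ge 1-t$), applied to $x=m\,G(p)$: this gives $1-m\,G(p)\le -\ln\bigl(m\,G(p)\bigr)=-\ln m-\frac1m\sum_j\ln p_j$. So it suffices to prove
\[
-\ln m-\frac1m\sum_j\ln p_j\ \le\ m\ln m+m\sum_j p_j\ln p_j,
\]
which rearranges to $\sum_j\bigl(p_j+\frac1{m^2}\bigr)\ln p_j\ge -\frac{m+1}{m}\ln m$ after dividing by $m$; more cleanly, $\frac1m\sum_j\ln p_j+ m\sum_j p_j\ln p_j\ge -(m+1)\ln m$. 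Here $\sum_j \ln p_j\ge -$ something is false in general (the $p_j$ can be tiny), so this crude route via $1-x\le-\ln x$ on the product is likely too lossy; I expect this to be the main obstacle.

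The better approach is to apply the inequality $e^{-t}\ge 1-t$ term-by-term rather than to the whole product. Write $m\,G(p)=\prod_j (m p_j)^{1/m}$, and set $t_j$ so that $(mp_j)^{1/m}=e^{-t_j/m}$, i.e. $t_j=-\ln(mp_j)=\ln\frac1{mp_j}$. Then $m\,G(p)=e^{-\frac1m\sum_j t_j}$, and we want $1-e^{-\bar t}$ with $\bar t=\frac1m\sum_j t_j$. By convexity of $e^{-x}$ (or by AM–GM applied cleverly), I'd instead use the weighted form: by the weighted AM–GM inequality with weights $p_j$, $\prod_j (mp_j)^{p_j}\le \sum_j p_j\cdot mp_j$, but we need the unweighted product. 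The cleanest path: apply the scalar bound $1-e^{-t}\le $ ??? — actually use that for each $j$, $-\ln(mp_j)\ge 1-mp_j$ is the wrong direction. Let me instead bound $1-m\,G(p)=1-\prod_j(mp_j)^{1/m}$ using the inequality $1-\prod_j x_j^{1/m}\le \frac1m\sum_j(1-x_j)\cdot\frac{1}{?}$ — the sharp statement is AM–GM: $\prod_j x_j^{1/m}\ge 1-\frac1m\sum_j(1-x_j)$ when... no. The right elementary lemma is: for $x_j>0$, $1-\prod_j x_j^{1/m}\le \frac1m\sum_j \bigl(1-x_j+x_j\ln x_j\bigr)$? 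One verifies $1-y\le -\ln y$ gives $1-\prod x_j^{1/m}\le -\frac1m\sum\ln x_j$; then I must absorb $-\frac1m\sum_j\ln(mp_j)=\bar t$ into $\sum_j p_j\ln(mp_j)=\D pu\cdot\ln 2\cdot\frac{1}{?}$. So the real content is the gap between $-\frac1m\sum_j\ln(mp_j)$ (which can be huge) and $m\ln2\cdot\D pu$. Since $1-m\,G(p)\le 1$ always, and the RHS $m\ln2\,\D pu$ could be small when $p$ is near uniform, the bound is delicate exactly near uniform; there one should Taylor-expand: writing $p_j=\frac1m(1+\epsilon_j)$ with $\sum\epsilon_j=0$, both sides are $\Theta\bigl(\frac1m\sum\epsilon_j^2\bigr)$ to leading order, and I'd verify the constant $\ln2$ (i.e. the natural-log normalization) makes the inequality hold with the right second-order behavior, then argue the general case by a global convexity/tangent-line argument on the function $p\mapsto m\ln2\,\D pu-(A(p)-G(p))$, showing it is nonnegative with equality only at $p=u$. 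The main obstacle is making this global argument rigorous — I expect to reduce to a one-variable or Lagrange-multiplier analysis showing the minimum of $\D pu$ for fixed $G(p)$ (equivalently fixed $\sum\ln p_j$) is attained at a two-valued $p$, reducing to a single-variable inequality that can be checked directly.
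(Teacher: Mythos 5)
Your reduction of the statement to the inequality $1-m\,G(p)\le m\ln m+m\sum_j p_j\ln p_j$ is correct, and you rightly diagnose that the one-shot bound $1-mG(p)\le-\ln(mG(p))$ is fatally lossy (its right-hand side blows up as some $p_j\to0$ while $\ln2\cdot\D pu$ stays finite there). But after that diagnosis the argument never closes: the rest of the proposal is a sequence of candidate lemmas that you yourself discard, followed by a plan (``Taylor-expand near uniform, then a global convexity/tangent-line argument, then a Lagrange-multiplier reduction to a two-valued $p$ and a single-variable check'') none of whose steps is actually carried out. The local expansion is also not quite right as stated: with $p_j=\frac1m(1+\epsilon_j)$, $\sum_j\epsilon_j=0$, one gets $A(p)-G(p)\approx\frac1{2m^2}\sum_j\epsilon_j^2$ while $\ln2\cdot\D pu\approx\frac1{2m}\sum_j\epsilon_j^2$, so the two sides differ by a factor of $m$ at leading order (fortunately in the favorable direction). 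More importantly, the proposed endgame --- minimize $\D pu$ subject to fixed $G(p)$, argue the minimizer is two-valued, and verify the resulting one-variable inequality --- is left entirely unexecuted, and the boundary of the simplex (where some $p_j=0$, hence $G(p)=0$) would need separate treatment. As it stands this is a correct setup plus an honest to-do list, not a proof.

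For comparison, the paper's proof does run through a constrained critical-point analysis, but on a different functional: setting $x_j=mp_j$, it shows $f(x)=\sum_j x_j\ln x_j+\sqrt[m]{\prod_j x_j}\ge1$ on the simplex $\sum_j x_j=m$. The boundary is handled via convexity of $t\ln t$, and at an interior critical point the stationarity equations are combined (using the identity $\sum_jx_j\,\partial f/\partial x_j=m+f(x)=m\lambda$ together with the elementary bound $\ln t\ge1-1/t$) to conclude $f(x)\ge1$ directly, with no reduction to two-valued distributions. If you want to complete your route, that manipulation at the critical point is the missing ingredient you would need to supply.
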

\begin{proof}
	Let $x_j=mp(j)$, $x=\tup{x_1,\ldots,x_n}$, and define
	\[f(x)=\sum x_j\ln{x_j}\m3+\sqrt[m]{\prod x_j}.\]
	Theorem~\ref{thm.diff} is equivalent to showing that, for $x_1,\dots,x_n\ge0$,
	if $\sum x_j=m$, then $f(x)\ge1.$

	We proceed using Lagrange multipliers. 
	We first need to check that $f(x)\ge1$ when $x$ is on the boundary, i.e.\
	$x_j=0$ for some $j\in[n]$. Without loss of generality, assume $x_1=0$. 
	By the convexity of $t\ln t$, the minimum is attained when
	$x_2=\dots=x_m=m/(m-1)$. Thus,
	\[f(x)\ge(m-1)\frac{m}{m-1}\ln\frac{m}{m-1}>
		m\biggl(1-\frac{m-1}{m}\biggr)=1.\]
	According to \cite[Theorem on page 300]{luen2003}, it suffices to show that $f(x)\ge1$ for
	any $x$ that satisfies the following system of equations.
	\[{\partial f}/{\partial x_j}=1+\ln x_j+\sigma/(mx_j)=\lambda,\quad
		\text{for $j\in[m]$},\tag{$L$}\]
	where $\sigma=\sqrt[m]{x_1\cdots x_m}\ne0$.
	Without loss of generality, since $\sum x_j=m$, we may assume $x_m\le1$. The
	system $(L)$ implies
	\begin{gather*}
		\sum_{j=1}^{m-1}x_j(\partial f/\partial x_j)=
			m-x_m+\sum_{j=1}^{m-1}x_j\ln x_j+\sigma(m-1)/m=\lambda(m-x_m),\\
		(m-1)x_m(\partial f/\partial x_m)=(m-1)(x_m+x_m\ln x_m+\sigma/m)
			=(m-1)\lambda x_m.
	\end{gather*}
	Subtracting the second from the first we get
	\[\sum_{j=1}^{m-1}x_j\ln x_j-(m-1)x_m\ln x_m=m(\lambda-1)(1-x_m).\]
	We also have
	\[\sum x_j(\partial f/\partial x_j)=m+f(x)=m\lambda.\]
	Suppose $x=(x_1,\dots,x_m)$ satisfies the system $(L)$. Since $x_m\le1$,
	we have $x_m\ln x_m\le0$, and using the last two equations we have
	\[f(x)=m(\lambda-1)\ge\frac{\sum_{j=1}^{m-1}x_j\ln x_j}{1-x_m}
		\ge\frac{\sum_{j=1}^{m-1}x_j(1-1/x_j)}{1-x_m}=1.\]
	This completes the proof.
\end{proof}

\begin{cor}
	For any nonnegative real sequence $\alpha=(\alpha_1,\dots,\alpha_m)$,
	\[A(\alpha)-G(\alpha)\le\sum\alpha_j\ln\frac{\alpha_j}{A(\alpha)}.\]
\end{cor}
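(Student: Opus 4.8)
The plan is to reduce the Corollary to Theorem~\ref{thm.diff} by a simple normalization. If $A(\alpha)=0$ then every $\alpha_j=0$ and both sides equal $0$ (using the convention $0\ln0=0$), so assume $A(\alpha)>0$. Then $\sum_j\alpha_j=mA(\alpha)>0$, and we set
\[p(j)=\frac{\alpha_j}{mA(\alpha)}\qquad(j\in[m]),\]
which is a probability distribution over $[m]$ since $\sum_j p(j)=\frac{\sum_j\alpha_j}{mA(\alpha)}=1$.

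Next I would compute the three quantities appearing in Theorem~\ref{thm.diff} for this $p$. First, $A(p)=\frac1m\sum_j p(j)=\frac1m$. Second, by homogeneity of the geometric mean, $G(p)=\sqrt[m]{\prod_j p(j)}=\frac{1}{mA(\alpha)}\sqrt[m]{\prod_j\alpha_j}=\frac{G(\alpha)}{mA(\alpha)}$. Third, writing $u$ for the uniform distribution on $[m]$ and converting to natural logarithm, $\ln2\cdot\D pu=\sum_j p(j)\ln\frac{p(j)}{1/m}=\sum_j p(j)\ln\bigl(mp(j)\bigr)$, and since $mp(j)=\alpha_j/A(\alpha)$ this equals $\sum_j p(j)\ln\frac{\alpha_j}{A(\alpha)}=\frac{1}{mA(\alpha)}\sum_j\alpha_j\ln\frac{\alpha_j}{A(\alpha)}$ (entries with $\alpha_j=0$ contribute $0$ on both sides under the usual convention).

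Plugging these into $A(p)-G(p)\le\ln2\cdot\D pu$ gives
\[\frac1m-\frac{G(\alpha)}{mA(\alpha)}\le\frac{1}{mA(\alpha)}\sum_j\alpha_j\ln\frac{\alpha_j}{A(\alpha)},\]
and multiplying through by the positive quantity $mA(\alpha)$ yields exactly $A(\alpha)-G(\alpha)\le\sum_j\alpha_j\ln\frac{\alpha_j}{A(\alpha)}$, as claimed. There is no real obstacle here: the argument is pure bookkeeping, and the only point requiring a word of care is the degenerate cases ($A(\alpha)=0$, or some $\alpha_j=0$), which are handled by the standard convention $0\log0=0$ and are consistent with the hypotheses of Theorem~\ref{thm.diff} (which only requires a distribution over $[m]$, allowing zero masses).
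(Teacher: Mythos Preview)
Your proof is correct and follows exactly the paper's approach: apply Theorem~\ref{thm.diff} with $p(j)=\alpha_j/\sum_j\alpha_j$ and unwind the definitions. The paper's proof is a single line to this effect; you have simply written out the bookkeeping (and handled the degenerate cases) explicitly.
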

\begin{proof}
	Apply Theorem~\ref{thm.diff} with $p(j)=\alpha_j\big/\n2\sum_j\alpha_j$.
\end{proof}
\paragraph{Remark} Let $\widehat\alpha$ to be a normalized version of $\alpha$,
with $\widehat\alpha_j=\alpha_j\big/\n2\sum\alpha_j$. Let also $u$ denote the uniform
distribution on $[m]$. Then, the right-hand side takes the form
$\sum\alpha_j\ln(m\widehat\alpha_j)=mA(\alpha)\sum\widehat\alpha_j\ln(\widehat\alpha_j/u_j)$,
and the above inequality becomes
	\[{A(\alpha)-G(\alpha)\over A(\alpha)}\le m\ln2\cdot\D{\widehat\alpha}{u}.\]

\section{Properties of Hellinger volume}
\paragraph{Hellinger volume lower bounds mutual information}
The next lemma shows that Hellinger volume can be used to lower bound
mutual information.
\begin{lemma}\label{lemma.mutual2hel}
	Consider random variables $Z\in_R[m]$, $\Phi(Z)\in\Omega$, and distributions
	$\Phi_z$, for $z\in[m]$, over\/ $\Omega$. Suppose that given $Z=z$, the
	distribution of $\Phi(Z)$ is $\Phi_z$. Then
	\[\II Z{\Phi({Z})}\ge\frac{h_m(\Phi_1,\dots,\Phi_m)}{m\ln2}.\]
\end{lemma}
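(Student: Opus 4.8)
The plan is to expand the mutual information $\II Z{\Phi(Z)}$ directly from its definition and then lower bound it by the Hellinger volume using the relative-entropy bound from Theorem~\ref{thm.diff}. Since $Z$ is uniform on $[m]$, write $\bar\Phi = \frac1m\sum_{z=1}^m \Phi_z$ for the (marginal) distribution of $\Phi(Z)$. The standard identity gives
\[
  \II Z{\Phi(Z)} = \frac1m\sum_{z=1}^m \D{\Phi_z}{\bar\Phi}
  = \frac1m\sum_{z=1}^m \sum_{\omega\in\Omega}\Phi_z(\omega)\log\frac{\Phi_z(\omega)}{\bar\Phi(\omega)}.
\]
I would swap the order of summation so that the outer sum is over $\omega\in\Omega$: for each fixed $\omega$, the inner contribution is $\frac1m\sum_z \Phi_z(\omega)\log(\Phi_z(\omega)/\bar\Phi(\omega))$, which is exactly $A(\alpha^\omega)$ times a divergence term, where $\alpha^\omega = (\Phi_1(\omega),\dots,\Phi_m(\omega))$ and $A(\alpha^\omega) = \bar\Phi(\omega)$.

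The key step is to apply (the Corollary to) Theorem~\ref{thm.diff} pointwise in $\omega$. For the nonnegative sequence $\alpha^\omega$ we have $A(\alpha^\omega) - G(\alpha^\omega) \le \sum_j \alpha^\omega_j \ln(\alpha^\omega_j/A(\alpha^\omega))$; converting $\ln$ to $\log$ costs a factor $\ln 2$, so
\[
  \sum_{j=1}^m \Phi_j(\omega)\log\frac{\Phi_j(\omega)}{\bar\Phi(\omega)}
  \ge \frac{1}{\ln 2}\bigl(A(\alpha^\omega) - G(\alpha^\omega)\bigr)
  = \frac{1}{\ln 2}\Bigl(\bar\Phi(\omega) - \sqrt[m]{\Phi_1(\omega)\cdots\Phi_m(\omega)}\Bigr).
\]
Summing this over all $\omega\in\Omega$ and dividing by $m$, the $\bar\Phi(\omega)$ terms sum to $1$ (since $\bar\Phi$ is a distribution) and the geometric-mean terms sum to $\sum_\omega \sqrt[m]{\Phi_1(\omega)\cdots\Phi_m(\omega)} = 1 - h_m(\Phi_1,\dots,\Phi_m)$ by the definition of Hellinger volume. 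This yields exactly
\[
  \II Z{\Phi(Z)} \ge \frac{1}{m\ln 2}\Bigl(1 - \bigl(1 - h_m(\Phi_1,\dots,\Phi_m)\bigr)\Bigr)
  = \frac{h_m(\Phi_1,\dots,\Phi_m)}{m\ln 2}.
\]

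I expect the main obstacle to be purely bookkeeping rather than conceptual: one must be careful with the $\log$ versus $\ln$ normalization (the Corollary is stated with $\ln$, but relative entropy in this paper uses $\log_2$, producing the $\ln 2$ factor in the denominator), and one should handle the degenerate cases where some $\Phi_z(\omega) = 0$ or $\bar\Phi(\omega) = 0$ so that the pointwise inequality still makes sense (the usual conventions $0\log 0 = 0$ and $0\log(0/0)=0$, together with the boundary analysis already built into the proof of Theorem~\ref{thm.diff}, take care of this). Once the sum-swap is done and Theorem~\ref{thm.diff} is invoked termwise, the rest is immediate from the definition of $h_m$.
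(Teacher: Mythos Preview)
Your proof is correct and follows essentially the same approach as the paper: both expand $\II Z{\Phi(Z)}$ as an average of divergences to the mixture $\bar\Phi$, swap the order of summation, and then apply Theorem~\ref{thm.diff} pointwise in $\omega$ to compare each term to the $A-G$ gap. The only cosmetic difference is that the paper normalizes $\alpha^\omega$ to a probability vector $\rho$ and invokes Theorem~\ref{thm.diff} directly, whereas you invoke the Corollary (which packages that normalization); the computations are identical.
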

\begin{proof}
	The left-hand side can be expressed as follows (see \cite[page 20]{ct06}),
	\begin{align*}
		\II Z{\Phi({Z})}&=\sum_{j,\omega}\p{Z=j}\cdot \pr{\Phi(Z)=\omega}{Z=j}
			\cdot\log\frac{\pr{\Phi(Z)=\omega}{Z=j}}{\p{\Phi(Z)=\omega}}\\
		&=\sum_{j,\omega}\frac{1}{m}\Phi_j(\omega)
			\log\frac{\Phi_j(\omega)}{\frac{1}{m}\sum_j\Phi_j(\omega)},
	\end{align*}
	and the right-hand side
	\[h_m(\Phi_1,\dots,\Phi_m)=\sum_\omega\biggl(
		\frac{1}{m}\sum_j\Phi_j(\omega)-
			\Bigl(\prod_j\Phi_j(\omega)\Bigr)^\frac{1}{m}\biggr).\]
	It suffices to show that for each $\omega\in\Omega$,
	\[\sum_{j}\frac{1}{m}\Phi_j(\omega)
		\log\frac{\Phi_j(\omega)}{\frac{1}{m}\sum_j\Phi_j(\omega)}\ge
		\frac{1}{m\ln2}\biggl(\frac{1}{m}\sum_j\Phi_j(\omega)-
			\Bigl(\prod_j\Phi_j(\omega)\Bigr)^\frac{1}{m}\biggr).\]
	Let $s=\sum_j\Phi_j(\omega)$, and $\rho(j)=\Phi_j(\omega)/s$, for
	$j\in[m]$; thus, for all $j$, $\rho(j)\in[0,1]$, and $\sum_j\rho(j)=1$. Under
	this renaming of variables, the left-hand side becomes
	$\ln2\cdot\frac{s}{m}\sum_j\rho(j)\log(m\rho(j))$ and the right one
	$\frac{s}{m}\cdot({1\over m}-\root m\of{\prod\rho(j)})$.
	Thus, we need to show
	\[\ln2\cdot\sum_j\rho(j)\log(m\rho(j))\ge\frac{1}{m}
		-\Bigl(\prod_j\rho(j)\Bigr)^\frac{1}{m}.\]
	Observe that the left-hand side is $\ln2\cdot\D\rho u$, and the 
	inequality holds by Theorem~\ref{thm.diff}.
\end{proof}

\paragraph{Symmetric-difference lemma}
Let $P=\{P_z\}_{z\in Z}$ be a collection of distributions over a common space
$\Omega$. For $A\subseteq Z$, the {\em Hellinger volume of $A$ with respect to
$P$}, denoted by $\psi(P;A)$, is
\[\psi(A;P)=1-\sum_{\omega\in\Omega}\Bigl({\prod_{z\in A}P_z(\omega)}\Bigr)^{{1}/{|A|}}.\]
The collection $P$ will be understood from the context and we'll say that the
Hellinger volume of $A$ is $\psi(A)$. Note that, from Fact~\ref{fact.nonneg},
$\psi(A;P)\ge0$.

The following lemma can be seen as an analog to the weak triangle inequality
that is satisfied by the square of the Hellinger distance.
\begin{lemma}[Symmetric-difference lemma]\label{lemma.sym_dif}
	If $A,B$ satisfy $|A|=|B|=|A\Delta B|$, where
	$A\Delta B =(A\setminus B)\cup(B\setminus A)$.  Then
	\[\psi(A)+\psi(B)\ge\frac{1}{2}\cdot\psi(A\Delta B).\]
\end{lemma}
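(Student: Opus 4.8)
The plan is to reduce the symmetric-difference lemma to a pointwise inequality, exactly as in the proof of Lemma~\ref{lemma.mutual2hel}. Write $a=|A|=|B|=|A\Delta B|=:r$, and for each $\omega\in\Omega$ set $\alpha_\omega=\prod_{z\in A}P_z(\omega)$, $\beta_\omega=\prod_{z\in B}P_z(\omega)$, and $\gamma_\omega=\prod_{z\in A\Delta B}P_z(\omega)$. Since $\psi(A)+\psi(B)-\tfrac12\psi(A\Delta B)=1-\tfrac12-\sum_\omega\bigl(\alpha_\omega^{1/r}+\beta_\omega^{1/r}-\tfrac12\gamma_\omega^{1/r}\bigr)$, it suffices to prove the pointwise bound $\alpha_\omega^{1/r}+\beta_\omega^{1/r}-\tfrac12\gamma_\omega^{1/r}\le\tfrac12$ for every $\omega$, i.e.\ (dropping $\omega$) $\alpha^{1/r}+\beta^{1/r}\le\tfrac12\bigl(1+\gamma^{1/r}\bigr)$. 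The key algebraic observation is a relation among $\alpha,\beta,\gamma$: every $z\in A\cap B$ contributes its factor $P_z(\omega)$ to both $\alpha$ and $\beta$ but not to $\gamma$, while every $z\in A\Delta B$ contributes to exactly one of $\alpha,\beta$ and to $\gamma$. Hence $\alpha\beta=\bigl(\prod_{z\in A\cap B}P_z(\omega)\bigr)^2\cdot\gamma$, so writing $t=\prod_{z\in A\cap B}P_z(\omega)\in[0,1]$ we get $\alpha\beta=t^2\gamma$, and moreover $\alpha^{1/r}\le t^{|A\cap B|/r}$ and $\beta^{1/r}\le t^{|A\cap B|/r}$ since the remaining factors lie in $[0,1]$; in particular each of $\alpha^{1/r},\beta^{1/r}$ is at most $1$.

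With $u=\alpha^{1/r}$ and $v=\beta^{1/r}$ we have $uv=t^{2/r}\gamma^{1/r}\le\gamma^{1/r}$ (as $t\le1$), and both $u,v\in[0,1]$. So the claim reduces to: if $u,v\in[0,1]$ then $u+v\le\tfrac12(1+uv)$? That is false in general (take $u=v=1$), which tells me the relation $uv\le\gamma^{1/r}$ alone is too weak — I need to exploit the cardinality hypothesis more carefully. The correct route is to note $|A\cap B|=\tfrac12(|A|+|B|-|A\Delta B|)=\tfrac{r}{2}$, so $|A\cap B|/r=\tfrac12$ and the bound becomes $u\le t^{1/2}$, $v\le t^{1/2}$, hence $u+v\le 2\sqrt t$. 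Also $uv=t\cdot\gamma^{1/r}$ exactly when the factors outside $A\cap B$ multiply to give $\gamma$ after taking $r$-th roots — more precisely $u^r v^r=t^2\gamma$, so $(uv)^{r}=t^2\gamma$, i.e.\ $uv=(t^2\gamma)^{1/r}=t^{2/r}\gamma^{1/r}$. Setting $w=t^{1/r}\in[0,1]$ and $g=\gamma^{1/r}\in[0,1]$, we have $u,v\le w$ and $uv=w^2 g$. We must show $u+v\le\tfrac12(1+g)$. By AM–GM on the constraint, $u+v\le w+w^2g/w=w(1+wg)$? No: given $uv=w^2g$ fixed and $u,v\le w$, the sum $u+v$ is maximized at an endpoint, so $u=w$, $v=wg$, giving $u+v=w(1+g)$. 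Thus it suffices that $w(1+g)\le\tfrac12(1+g)$, i.e.\ $w\le\tfrac12$ — which again need not hold.

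So the main obstacle, and where real work is needed, is that the naive pointwise reduction loses too much: bounding each $r$-th root by $w=t^{1/r}$ and then maximizing is too crude because it ignores how $\gamma$ is tied to the \emph{remaining} factors. The fix I would pursue is to keep the three products genuinely coupled: write $\alpha=t\cdot a'$, $\beta=t\cdot b'$, $\gamma=a'b'$ where $a'=\prod_{z\in A\setminus B}P_z(\omega)$, $b'=\prod_{z\in B\setminus A}P_z(\omega)$, and $t=\prod_{z\in A\cap B}P_z(\omega)$, all in $[0,1]$, with $|A\setminus B|=|B\setminus A|=|A\cap B|=r/2$ (using $|A\Delta B|=r$). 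Then $\alpha^{1/r}=(ta')^{1/r}$ etc., and by AM–GM applied with $r/2$ copies of "$t^{2/r}$"-type terms one can bound $\alpha^{1/r}+\beta^{1/r}$ by an expression symmetric in $t,a',b'$; the target $\tfrac12(1+\gamma^{1/r})=\tfrac12(1+(a'b')^{1/r})$ should then follow from a further application of AM–GM / the concavity of $x\mapsto x^{1/r}$, for instance via $(ta')^{1/r}\le\tfrac{r/2}{r}\cdot\frac{?}{}$. I expect the clean statement to be: for nonnegatives with $t,a',b'\in[0,1]$, $(ta')^{1/r}+(tb')^{1/r}\le\tfrac12\bigl(1+(a'b')^{1/r}\bigr)$ whenever the exponents match the cardinalities, and that this is precisely the instance of the arithmetic–geometric comparison (Theorem~\ref{thm.diff} or Fact~\ref{fact.nonneg}) that the lemma is designed to invoke. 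Getting that final inequality exactly right — choosing the correct weights in the weighted AM–GM so the factor $\tfrac12$ appears — is the crux of the argument, and is the step I would spend the most care on; everything else is bookkeeping about which elements of $A\cup B$ land in which of the three sets.
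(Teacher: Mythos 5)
There is a genuine gap: your proposal never arrives at a correct proof, and the strategy you commit to --- reducing the lemma to a pointwise inequality in $\omega$ --- cannot work. You notice this yourself midway (the candidate pointwise bounds keep failing at $u=v=1$), but your proposed fix is still a pointwise statement, and it is still false: the "clean statement" you conjecture at the end, $(ta')^{1/r}+(tb')^{1/r}\le\tfrac12\bigl(1+(a'b')^{1/r}\bigr)$ for $t,a',b'\in[0,1]$, fails at $t=a'=b'=1$ (left side $2$, right side $1$). The same configuration kills any pointwise reduction: at a point $\omega_0$ where every $P_z(\omega_0)$ is close to $1$, the per-$\omega$ contribution to $\psi(A)+\psi(B)$ is near $0$ while the contribution to $\tfrac12\psi(A\Delta B)$ is also near $0$, so the lemma survives only because of cancellation across \emph{all} of $\Omega$, not term by term. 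The inequality is intrinsically global.

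The ingredient you are missing is to exploit the normalization of the $P_z$ as distributions, which only appears after summing over $\omega$. Your decomposition is exactly the right one: with $|A\setminus B|=|B\setminus A|=|A\cap B|=r/2$, define vectors $u,v,w\in\mathbb{R}^{\Omega}$ by $u(\omega)=\bigl(\prod_{z\in A\setminus B}P_z(\omega)\bigr)^{1/r}$, and similarly $v$ for $B\setminus A$ and $w$ for $A\cap B$. Then $\psi(A)=1-u\cdot w$, $\psi(B)=1-v\cdot w$, $\psi(A\Delta B)=1-u\cdot v$, and --- this is the key global fact --- $\|u\|^2=\sum_\omega\bigl(\prod_{z\in A\setminus B}P_z(\omega)\bigr)^{1/|A\setminus B|}=1-\psi(A\setminus B)\le1$ by Fact~\ref{fact.nonneg}, and likewise $\|v\|,\|w\|\le1$. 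The claim becomes $3+u\cdot v\ge2(u+v)\cdot w$, which follows from Cauchy--Schwarz ($2(u+v)\cdot w\le2\|u+v\|$) and then squaring $3+u\cdot v\ge2\|u+v\|$, which reduces, using $\|u\|^2,\|v\|^2\le1$, to $(1-u\cdot v)^2\ge0$. So the cardinality hypothesis is used only to make the three exponents equal to $1/r$ so that the inner products line up with the three Hellinger volumes; the factor $\tfrac12$ comes out of the Cauchy--Schwarz step, not out of a weighted AM--GM.
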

\begin{proof}
By our hypothesis, it follows that $A \setminus B$, $B \setminus A$ and $A \cap B$ all have
size $|A|/2$.
Define $u,v,w$ to be the vectors in $\mathbb{R}^{\Omega}$ defined by
\begin{align*}
	u(\omega)&=\Bigl(\n8{\prod_{z\in A\setminus B}\n{10}P_z(\omega)}\Bigr)^{1/|A|},\\
	v(\omega)&=\Bigl(\n8\prod_{z\in B\setminus A}\n{10}P_z(\omega)\Bigr)^{1/|A|},\\
	w(\omega)&=\Bigl(\n8\prod_{z\in A\cap B}\n{10}P_z(\omega)\Bigr)^{1/|A|}.
\end{align*}
By the definition of Hellinger volume,
\begin{align*}
	\psi(A)&=1-u\cdot w,\\
	\psi(B)&=1-v\cdot w,\\
	\psi(A\Delta B)&=1-u\cdot v.
\end{align*}
Thus the desired inequality is
\[2-(u+v)\cdot w\ge(1- u \cdot v)/2,\]
which is equivalent to
\begin{equation}\label{vector ineq 1}
	3+u\cdot v\ge2(u+v)\cdot w.
\end{equation}
Since
\begin{align*}
	\psi(A\setminus B)&=1-u \cdot u,\\
	\psi(B\setminus A)&=1-v \cdot v,\\
	\psi(A\cap B)&=1-w \cdot w,
\end{align*}
it follows that $\|u\|,\|v\|$ and $\|w\|$ are all at most 1.
Thus $2(u+v)\cdot w\le2\|u+v\|$, and so (\ref{vector ineq 1})
follows from
\[3+u\cdot v\ge2\|u+v\|.\]
Squaring both sides, it suffices to show
\[9+ 6u \cdot v+(u \cdot v)^2  \ge4(\|u\|^2+\|v\|^2+2 u \cdot v)\]
Using the fact that $\|u\|\le1$ and $\|v\|\le1$ this
reduces to
\[(1-u\cdot v)^2\ge0,\]
which holds for all $u,v$.
\end{proof}

Let $s_l,s_r$ be two disjoint subsets of $[k]$. Let $I_l \subseteq I$ (resp.,
$I_r$) be the set of strings with odd number of zeros in the coordinates indexed
by $s_l$ (resp., $s_r$). Let $s_p = s_l \cup s_r$ and
$I_p = I_l \Delta I_r$. It is not hard to see that $I_p$ is the set of strings
with odd number of zeros in the coordinates indexed by $s_p$.  By the
symmetric-difference lemma,
\begin{equation}\label{applicationOfSym}
\psi(I_l) + \psi(I_r) \ge \frac{\psi(I_p)}{2}.
\end{equation}

For each $j\in[k]$, let $I_j\subseteq I$ be the set of strings where the $j$-th
coordinate is set to zero.  Applying the above observation inductively, we can
obtain the following lemma.

\begin{lemma}\label{applicationOfSymGeneral}
	Let $s \subseteq [k]$ be an arbitrary non-empty set and let $I_s \subseteq I$
	be the set of strings with odd number of zeros in the coordinates indexed by
	$s$. Then,
	\[\sum_{j \in s}\psi(I_j)\ge\frac{\psi(I_s)}{2^{\lceil\log |s| \rceil}}.\]
\end{lemma}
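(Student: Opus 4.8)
The plan is to prove the lemma by strong induction on $|s|$, with $(\ref{applicationOfSym})$ doing the work in the inductive step. The base case $|s|=1$ is immediate: if $s=\{j\}$, then $I_s$ is exactly the set of strings whose $j$-th coordinate is zero, i.e.\ $I_s=I_j$, and $\lceil\log 1\rceil=0$, so the claimed bound reads $\psi(I_j)\ge\psi(I_j)$.

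For the inductive step, suppose $|s|\ge2$ and partition $s$ into two disjoint \emph{nonempty} parts $s_l,s_r$ with $|s_l|,|s_r|\le\lceil|s|/2\rceil$ (say of sizes $\lceil|s|/2\rceil$ and $\lfloor|s|/2\rfloor$). Since $s=s_l\cup s_r$ is a disjoint union and parities add, a string has an odd number of zeros in $s$ exactly when it has an odd number of zeros in precisely one of $s_l,s_r$; that is, $I_s=I_{s_l}\Delta I_{s_r}$. Moreover, for any nonempty coordinate set, flipping one fixed coordinate of it is a bijection between the odd- and even-parity strings, so $|I_{s_l}|=|I_{s_r}|=|I_s|=2^{k-1}$, and hence the hypothesis $|A|=|B|=|A\Delta B|$ of the symmetric-difference lemma is satisfied with $A=I_{s_l}$, $B=I_{s_r}$. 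Thus $(\ref{applicationOfSym})$ yields $\psi(I_{s_l})+\psi(I_{s_r})\ge\psi(I_s)/2$.

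Now apply the induction hypothesis to $s_l$ and to $s_r$ and add the two inequalities. Because $\psi(\cdot)\ge0$ by Fact~\ref{fact.nonneg} and $\lceil\log|s_l|\rceil,\lceil\log|s_r|\rceil\le\lceil\log\lceil|s|/2\rceil\rceil=:c$, we may weaken both exponents to the common value $c$, obtaining
\[
\sum_{j\in s}\psi(I_j)=\sum_{j\in s_l}\psi(I_j)+\sum_{j\in s_r}\psi(I_j)
\ge\frac{\psi(I_{s_l})+\psi(I_{s_r})}{2^{c}}\ge\frac{\psi(I_s)}{2^{c+1}}.
\]
It remains to verify the elementary identity $\lceil\log\lceil n/2\rceil\rceil+1=\lceil\log n\rceil$ for every integer $n\ge2$: writing $2^{t-1}<n\le2^{t}$ with $t=\lceil\log n\rceil$, one checks $2^{t-2}<\lceil n/2\rceil\le2^{t-1}$, so $\lceil\log\lceil n/2\rceil\rceil=t-1$. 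Applying this with $n=|s|$ turns the last bound into $\psi(I_s)/2^{\lceil\log|s|\rceil}$, completing the induction.

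The only step that requires genuine care is confirming that the symmetric-difference lemma really applies at each split — this is exactly where nonemptiness of both halves (hence the need for $|s|\ge2$ before splitting) and the size count $|I_{s_l}|=|I_{s_r}|=|I_s|=2^{k-1}$ enter; the rest is bookkeeping with ceilings of logarithms and the nonnegativity of $\psi$.
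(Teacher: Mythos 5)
Your proof is correct and follows essentially the same route as the paper's: strong induction on $|s|$ with the balanced split into $s_l,s_r$ of sizes $\lceil|s|/2\rceil$ and $\lfloor|s|/2\rfloor$, the observation $I_s=I_{s_l}\Delta I_{s_r}$ feeding into the symmetric-difference lemma, and the identity $\lceil\log\lceil|s|/2\rceil\rceil+1=\lceil\log|s|\rceil$. You are in fact more careful than the paper in verifying the hypotheses $|I_{s_l}|=|I_{s_r}|=|I_s|=2^{k-1}$ needed to invoke Lemma~\ref{lemma.sym_dif}, and in using $\psi\ge0$ to justify weakening the exponents to a common value.
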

\begin{proof}
We prove the claim via induction on the size of $s$. If $s$ is a singleton set,
it trivially holds. Otherwise, assume that for any subset of $[k]$ of size less
than $|s|$, the claim is true.

Partition $s$ into two non-empty subsets $s_l,s_r$ with the property that
$|s_l|=\lceil|s|/2\rceil$ and $|s_l|=\lfloor|s|/2\rfloor$.
Then $\lceil\log|s|\rceil=1+\max\{\lceil\log|s_l|\rceil,\lceil\log|s_r|\rceil\}$.
By the inductive hypothesis,
\[\sum_{j \in s_l}\psi(I_{s_l})\ge\frac{\psi(I_{s_l})}{2^{\lceil\log|s_l|\rceil}}
	\quad\hbox{and}\quad
	\sum_{j\in s_r}\psi(I_{s_r})\ge\frac{\psi(I_{s_r})}{2^{\lceil\log|s_r|\rceil}}.\]
Thus,
\begin{align*}
	\sum_{j \in s}\psi(I_{s_l})
    &=\sum_{j \in s_l}\psi(I_{s_l})+\sum_{j \in s_r}\psi(I_{s_r})\\
		&\ge\frac{\psi(I_{s_l})}{2^{\lceil\log|s_l|}}+
			\frac{\psi(I_{s_r})}{2^{\lceil\log |s_r| \rceil}}
			&\hbox{by the Inductive Hypothesis,}\\
   	&\ge\frac{1}{2^{\lceil\log|s|\rceil-1}}[\psi(I_{s_l})+\psi(I_{s_r})]
			&\hbox{by the choice of $s_l$ and $s_r$,}\\
   	&\ge\frac{1}{2^{\lceil\log|s|\rceil}}\psi(I_s)
			&\hbox{by Equation (\ref{applicationOfSym}).}
\end{align*}
\end{proof}

Let $I_{OZ}\subseteq I$ be the set of strings which have odd number of zeros.
The next corollary is an immediate consequence of Lemma
\ref{applicationOfSymGeneral} when $s = [k]$.

\begin{lemma}\label{lemma.tree}
	\[\sum_{j=1}^k\psi(I_j)\ge\frac{\psi(I_{OZ})}{2^{\lceil\log k\rceil}}.\]
\end{lemma}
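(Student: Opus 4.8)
The plan is to obtain Lemma~\ref{lemma.tree} as the single special case $s=[k]$ of Lemma~\ref{applicationOfSymGeneral}. First I would observe that when $s=[k]$ the set $I_s$ appearing there — the strings of $I=\{0,1\}^k$ with an odd number of zeros among the coordinates indexed by $s$ — is exactly $I_{OZ}$, the set of strings with an odd number of zeros overall, because the coordinates indexed by $[k]$ are all of them. Likewise $\sum_{j\in s}\psi(I_j)$ becomes $\sum_{j=1}^k\psi(I_j)$, and $2^{\lceil\log|s|\rceil}=2^{\lceil\log k\rceil}$. Substituting these identifications into the inequality of Lemma~\ref{applicationOfSymGeneral} reproduces the claimed bound verbatim, so essentially no work remains.

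For completeness I would also record the self-contained route, for a reader who prefers not to pass through general $s$: apply the two-set inequality~(\ref{applicationOfSym}) repeatedly along a balanced binary tree whose $k$ leaves are the singleton sets $\{1\},\dots,\{k\}$. At an internal node formed by merging disjoint sets $s_l,s_r$ into $s_p=s_l\cup s_r$ with $I_p=I_{s_l}\Delta I_{s_r}$, the balance condition $|s_l|=|s_r|$ (taking $k$ a power of two, or padding otherwise) supplies the hypothesis $|A|=|B|=|A\Delta B|$ needed to invoke the symmetric-difference lemma, losing a factor $1/2$. Since the tree has $\lceil\log k\rceil$ levels, accumulating these losses produces the factor $2^{-\lceil\log k\rceil}$; this is precisely the induction already carried out in the proof of Lemma~\ref{applicationOfSymGeneral}, so I would simply cite it.

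The only point needing any care — a one-line parity check rather than a real obstacle — is confirming that $I_p$ is indeed the set of strings with an odd number of zeros in $s_p=s_l\cup s_r$: a string lies in $I_{s_l}\Delta I_{s_r}$ iff it has an odd number of zeros in exactly one of the two blocks, and since the blocks are disjoint this is equivalent to the total number of zeros in $s_l\cup s_r$ being odd. With that bookkeeping in place there is no genuine difficulty, and the lemma follows immediately from Lemma~\ref{applicationOfSymGeneral}.
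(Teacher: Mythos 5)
Your proposal is correct and matches the paper exactly: the paper also obtains this lemma as the immediate special case $s=[k]$ of Lemma~\ref{applicationOfSymGeneral}, with $I_{[k]}=I_{OZ}$. The additional tree-induction sketch you give is just a restatement of the proof of that general lemma, so nothing further is needed.
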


\paragraph{NOF communication complexity and Hellinger volume}
It was shown in \cite{byjks04} that the distribution of transcripts of a
two-party protocol on a fixed input is a product distribution. The same is true
for a multi-party NOF protocol.

\begin{lemma}
	Let $\Pi$ be a $k$-player NOF communication protocol with input set
	${\cal Z}={\cal Z}_1\times\cdots\times{\cal Z}_k$ and let\/ $\Omega$ be the
	set of possible transcripts. For each $j\in[k]$, there is
	a mapping $q_j:\Omega\times{\cal Z}^{-j}\to\mathbb{R}$, such that for every
	$z=(z_1,\dots,z_k)\in{\cal Z}\!$ and $\omega\in\Omega$,
	\[\p{\Pi(z)=\omega}=\prod_{j=1}^kq_j(\omega;z^{-j}).\]
\end{lemma}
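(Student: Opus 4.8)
The plan is to prove this by induction on the number of communicated bits in the transcript, exploiting the fact that at each step exactly one player speaks and that player's message depends only on the inputs visible to her (namely $z^{-j}$ if player $j$ speaks), together with her private random coins and the history so far. First I would set up notation: a transcript $\omega\in\Omega$ is a sequence of bits $\omega = (b_1, b_2, \dots, b_t)$, and the protocol specifies, for each round $i$, which player $P_i \in [k]$ speaks (this is determined by the prefix $\omega_{<i} = (b_1,\dots,b_{i-1})$, since the identity of the next speaker is common knowledge), and a probability $\Pr[\text{player } P_i \text{ sends } b_i \mid \omega_{<i}, z_{P_i}\text{'s view}]$. Crucially, because player $P_i$ does not see $z_{P_i}$, this conditional probability is a function of $\omega_{<i}$, $b_i$, and $z^{-P_i}$ only.

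The key step is then to write $\Pr[\Pi(z) = \omega]$ as a telescoping product over rounds: $\Pr[\Pi(z)=\omega] = \prod_{i=1}^{t} \beta_i(\omega_{\le i}; z^{-P_i})$, where $\beta_i(\omega_{\le i}; z^{-P_i})$ is the probability that the $i$-th bit equals $b_i$ given the history $\omega_{<i}$, as computed by player $P_i$ using her view $z^{-P_i}$ (private coins having been marginalized out). This factorization is valid because the players' private coins are mutually independent, so the events ``player $P_i$ sends $b_i$ at round $i$'' are independent across rounds conditioned on the transcript prefix. Now I would define $q_j(\omega; z^{-j}) = \prod_{i : P_i = j} \beta_i(\omega_{\le i}; z^{-j})$, the product of the factors contributed by the rounds in which player $j$ speaks; each such factor depends on $z$ only through $z^{-j}$ (and on $\omega$ only through a prefix of $\omega$, which is still a function of $\omega$), so $q_j$ is a well-defined map $\Omega \times \mathcal{Z}^{-j} \to \mathbb{R}$. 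Grouping the rounds of the product $\prod_i \beta_i$ by the identity of the speaker gives exactly $\prod_{j=1}^k q_j(\omega; z^{-j})$, as required. One should also handle the termination of the protocol: I would either assume (by padding) that all transcripts have the same length $t$, or note that whether the protocol has halted after prefix $\omega_{<i}$ is again a function of $\omega_{<i}$ alone, so it can be absorbed into the definition of the speaking-player function without affecting the argument.

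The main obstacle — really the only subtle point — is justifying the conditional independence across rounds that licenses the telescoping product, i.e.\ arguing carefully that once we condition on the transcript prefix $\omega_{<i}$, the bit $b_i$ depends only on player $P_i$'s fresh randomness and her view $z^{-P_i}$, and not on the other players' coins or on $z_{P_i}$. This is where the NOF structure enters: it is exactly the statement that player $P_i$ ``cannot see her own forehead,'' so her message is a (randomized) function of $(z^{-P_i}, \omega_{<i})$. I would make this precise by modelling the protocol as a function of the tuple of independent private coin strings $(R_1,\dots,R_k)$ and the input $z$, where the $i$-th bit is $g_i(\omega_{<i}, z^{-P_i}, R_{P_i})$ for deterministic functions $g_i$; then the chain rule for probabilities together with the independence of the $R_j$'s yields the factorization directly. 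Everything else is bookkeeping.
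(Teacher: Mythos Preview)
Your proposal is correct and follows essentially the same route as the paper: apply the chain rule over rounds to write $\Pr[\Pi(z)=\omega]$ as a product of per-round conditional probabilities, observe that the $i$-th factor depends on $z$ only through $z^{-P_i}$ because the speaker cannot see her own forehead, and then regroup the factors by speaker to obtain the $q_j$. Your treatment is in fact more careful than the paper's about justifying the conditional independence via the independence of the private coin strings and about the transcript-length/termination issue, but the underlying argument is the same.
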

\begin{proof}
	Suppose $|\Pi(z)|\le l$. For $i=1,\dots,l$, let $\Pi_i(z)$ denote the $i$-th bit
	sent in an execution of the protocol. Let $\sigma_i\in[k]$ denote the player
	that sent the $i$-th bit. Then
	\begin{align*}
		\p{\Pi(z)=\omega}&=\p{\Pi_1(z)=\omega_1,\dots,\Pi_l(z)=\omega_l}\\
		&=\prod_{i=1}^l\pr{\Pi_i(z)=\omega_i}{\Pi_{<i}(z)=\omega_{<i}},\\
		&=\prod_{i=1}^l\p{\Pi_i(z^{-\sigma_i};\omega_{<i})=\omega_i},
	\end{align*}
	because every bit send by player $j$ depends only on $z^{-j}$ and the
	transcript up to that point. We set
	\[q_j(\omega;z^{-j})=\n6
		\prod_{i:\sigma_i=j}\n6\p{\Pi_i(z^{-j};\omega_{<i})=\omega_i}\]
	to obtain the expression of the lemma.
\end{proof}

As a corollary, we have the following cut-and-paste property for Hellinger
volume.
\begin{lemma}\label{lemma.cutandpaste}
	Let $I_{OZ}\subseteq I$ be the set of inputs which have odd number of zeros,
	and let $I_{EZ}=I\setminus I_{OZ}$. Then \[\psi(I_{OZ})=\psi(I_{EZ}).\]
\end{lemma}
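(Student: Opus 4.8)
The plan is to combine the product decomposition of the transcript distribution from the previous lemma with a simple parity-counting argument. Write $P_z(\omega)=\p{\Pi(z)=\omega}$ for $z\in I=\{0,1\}^k$, so that the previous lemma supplies maps $q_j\colon\Omega\times\{0,1\}^{k-1}\to\mathbb{R}$ with $P_z(\omega)=\prod_{j=1}^kq_j(\omega;z^{-j})$. Both $I_{OZ}$ and $I_{EZ}$ have size $2^{k-1}$ (flipping any single coordinate pairs them up), so in both $\psi(I_{OZ})$ and $\psi(I_{EZ})$ the inner product over the set is raised to the same power $1/2^{k-1}$. Hence it suffices to prove that for every $\omega\in\Omega$,
\[\prod_{z\in I_{OZ}}P_z(\omega)=\prod_{z\in I_{EZ}}P_z(\omega).\]

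Substituting the product decomposition, each side factors over the $k$ players, so it is enough to establish the identity coordinate by coordinate: for each fixed $j\in[k]$ and each $\omega$,
\[\prod_{z\in I_{OZ}}q_j(\omega;z^{-j})=\prod_{z\in I_{EZ}}q_j(\omega;z^{-j}).\]
The key observation is this. Fix $j$ and fix $y\in\{0,1\}^{k-1}$; there are exactly two strings $z\in I$ with $z^{-j}=y$, one with $z_j=0$ and one with $z_j=1$, and flipping the $j$-th bit changes the parity of the number of zeros, so exactly one of these two strings lies in $I_{OZ}$ and the other in $I_{EZ}$. Therefore the map $z\mapsto z^{-j}$ restricts to a bijection from $I_{OZ}$ onto $\{0,1\}^{k-1}$, and also to a bijection from $I_{EZ}$ onto $\{0,1\}^{k-1}$. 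Both products above thus equal $\prod_{y\in\{0,1\}^{k-1}}q_j(\omega;y)$, which gives the coordinate-wise identity; multiplying over $j$ yields the displayed product identity, and taking $2^{k-1}$-th roots termwise in $\omega$ and summing gives $\psi(I_{OZ})=\psi(I_{EZ})$.

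I do not expect a real obstacle here: this is the multi-party analogue of the two-party cut-and-paste property for Hellinger distance, and the whole argument rests on the fact that the product structure of $P_z$ lets one reason one player at a time, together with the elementary bijections obtained by deleting the $j$-th coordinate within each parity class. The only points that need to be stated carefully are that $|I_{OZ}|=|I_{EZ}|=2^{k-1}$ (so the exponent $1/|A|$ matches on both sides) and that deleting coordinate $j$ genuinely hits every element of $\{0,1\}^{k-1}$ exactly once from each class.
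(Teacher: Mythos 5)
Your proof is correct and follows essentially the same route as the paper's: both reduce to the pointwise identity $\prod_{v\in I_{OZ}}P_v(\omega)=\prod_{u\in I_{EZ}}P_u(\omega)$ via the product decomposition $P_z(\omega)=\prod_j q_j(\omega;z^{-j})$, and both justify it by the observation that deleting (equivalently, flipping) the $j$-th coordinate matches $I_{OZ}$ and $I_{EZ}$ bijectively, so player $j$'s factors coincide on the two sides. Your write-up is just a more detailed rendering of the paper's one-line justification, with the additional (correct and worth stating) remark that $|I_{OZ}|=|I_{EZ}|=2^{k-1}$ so the exponents in the definition of $\psi$ agree.
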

\begin{proof}
Using the expression of the previous lemma, we have that for any
$\omega\in\Omega$,
\begin{align*}
	\prod_{v\in I_{OZ}}\n8P_v(\omega)
	\m6=\prod_{v\in I_{OZ}}\prod_{j=1}^kq_j(\omega;v^{-j})
	\m6=\prod_{u\in I_{EZ}}\prod_{j=1}^kq_j(\omega;u^{-j})
	\m6=\prod_{u\in I_{EZ}}\n8P_u(\omega).
\end{align*}
The middle equality holds, because for each $j\in[k]$ and $v\in I_{OZ}$ there is
a unique $u\in I_{EZ}$ such that $v^{-j}=u^{-j}$.
\end{proof}

\paragraph{Lower bounding Hellinger volume}
Eventually, we will need to provide a lower bound for the Hellinger volume of
several distributions over protocol transcripts. In the two-party case, one
lower bounds the Hellinger distance between the distribution of the transcripts
on an accepting input and the distribution of the transcripts on a rejecting
input. The following lemma will allow for similar conclusions in the multi-party
case.
\begin{lemma}\label{lemma.lb}
	Let $A \subseteq I$ be of size $t\ge2$. Suppose there is an event
	$T\subseteq\Omega$, a constant $0\le\delta\le1$ and an element $v$ in $A$ such
	that\/ $P_v(T)\ge1-\delta$ and that for all $u\in A$
	with $u\neq v$, $P_u(T)\le\delta$. Then
	\[\psi(A)\ge\bigl(2-4\sqrt{\delta(1-\delta)}\bigr)\cdot\frac{1}{t}.\]
\end{lemma}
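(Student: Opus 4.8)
The plan is to expand $\psi(A)=1-\sum_{\omega\in\Omega}G(\omega)$, where $G(\omega)=\bigl(\prod_{z\in A}P_z(\omega)\bigr)^{1/t}$ is the geometric mean of the values $P_z(\omega)$, and to bound $\sum_\omega G(\omega)$ from above by exploiting the event $T$. First I would split $\sum_\omega G(\omega)=\sum_{\omega\in T}G(\omega)+\sum_{\omega\notin T}G(\omega)$ and apply H\"older's inequality to each part with $t$ exponents all equal to $t$ (so that $t\cdot\tfrac1t=1$), obtaining $\sum_{\omega\in T}G(\omega)\le\prod_{z\in A}P_z(T)^{1/t}$ and $\sum_{\omega\notin T}G(\omega)\le\prod_{z\in A}\bigl(1-P_z(T)\bigr)^{1/t}$.

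Next comes the key normalization step. Writing $a=P_v(T)$ and $\bar b=\tfrac1{t-1}\sum_{u\in A\setminus\{v\}}P_u(T)$, the hypotheses give $a\ge1-\delta$ and $\bar b\le\delta$. Applying the arithmetic-geometric mean inequality to the $t-1$ factors with index $u\neq v$ in \emph{both} products --- using $\tfrac1{t-1}\sum_{u\ne v}P_u(T)=\bar b$ and $\tfrac1{t-1}\sum_{u\ne v}\bigl(1-P_u(T)\bigr)=1-\bar b$ --- yields $\sum_\omega G(\omega)\le F(a,\bar b):=a^{1/t}\bar b^{(t-1)/t}+(1-a)^{1/t}(1-\bar b)^{(t-1)/t}$. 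It is essential to feed the \emph{same} averages into both AM--GM bounds; treating the two products independently loses too much (one then only recovers the useless bound $\psi(A)\ge0$).

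Then I would maximize $F$ over the box $a\in[1-\delta,1]$, $\bar b\in[0,\delta]$, assuming $\delta\le\tfrac12$ (the relevant regime). Since $x\mapsto x^{1/t}$ is concave, $F(\cdot,\bar b)$ is concave with unconstrained maximizer $a=\bar b\le\tfrac12\le1-\delta$, hence non-increasing on $[1-\delta,1]$, so $F(a,\bar b)\le F(1-\delta,\bar b)$; likewise $F(1-\delta,\cdot)$ is concave with maximizer $\bar b=1-\delta\ge\delta$, hence non-decreasing on $[0,\delta]$, so $F(1-\delta,\bar b)\le F(1-\delta,\delta)=\phi(\delta)$, where $\phi(\delta)=[\delta(1-\delta)]^{1/t}\bigl(\delta^{(t-2)/t}+(1-\delta)^{(t-2)/t}\bigr)$. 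By concavity of $x\mapsto x^{(t-2)/t}$ together with $\delta+(1-\delta)=1$, the second factor is at most $2\cdot(\tfrac12)^{(t-2)/t}=2^{2/t}$, so $\phi(\delta)\le\bigl(4\delta(1-\delta)\bigr)^{1/t}=\gamma^{2/t}$ with $\gamma:=2\sqrt{\delta(1-\delta)}\in[0,1]$.

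Finally, since $2/t\le1$, the map $x\mapsto x^{2/t}$ is concave and lies below its tangent at $x=1$, i.e.\ $x^{2/t}\le1-\tfrac2t+\tfrac2t x$; taking $x=\gamma$ gives $1-\gamma^{2/t}\ge\tfrac2t(1-\gamma)$. Chaining everything, $\psi(A)=1-\sum_\omega G(\omega)\ge1-\phi(\delta)\ge1-\gamma^{2/t}\ge\tfrac2t(1-\gamma)=\frac{2-4\sqrt{\delta(1-\delta)}}{t}$, as required. I expect the main obstacle to be the optimization of $F$ over the box: the point is to notice that, because $\bar b\le\delta\le\tfrac12\le1-\delta$, the interior critical points of $F$ in each variable lie \emph{outside} the feasible interval, so $F$ is monotone in the appropriate direction on that interval and its maximum sits at the corner $(1-\delta,\delta)$ --- which is exactly what makes the clean coordinatewise argument go through. (For $\delta>\tfrac12$ the stated inequality is vacuous/false in general, so that case should be excluded or is understood.)
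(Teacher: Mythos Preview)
Your proof is correct and follows essentially the same route as the paper: H\"older plus AM--GM to reduce to $F(a,\bar b)=a^{1/t}\bar b^{(t-1)/t}+(1-a)^{1/t}(1-\bar b)^{(t-1)/t}$, then optimization over the box $[1-\delta,1]\times[0,\delta]$ to land at $(1-\delta,\delta)$, and finally a convexity/concavity argument in the exponent to extract the factor $2/t$. The only cosmetic differences are that you use the $t$-fold generalized H\"older (the paper uses the two-term version with exponents $t$ and $t/(t-1)$, then AM--GM inside the second factor), and your endgame first bounds $\phi(\delta)\le\gamma^{2/t}$ via concavity of $x\mapsto x^{(t-2)/t}$ and then uses the tangent of $x^{2/t}$ at $x=1$, whereas the paper applies the secant inequality for the concave map $x\mapsto 1-g(1-\delta,\delta,x)$ on $[0,1/2]$ directly; both yield the same bound.
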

\begin{proof}
We need to show
\[1-\sum_{\omega\in\Omega}\prod_{u\in A}P_u(\omega)^{\frac{1}{t}}\ge
	\bigl(2-4\sqrt{\delta(1-\delta)}\bigr)\cdot\frac{1}{t}.\]
Let $a=P_v(T)=\sum_{\omega\in T} P_v(\omega)$ and
$b=\sum_{\omega \in T} \tfrac{1}{t-1} \sum_{u \neq v} P_u(\omega)$.  Notice that by assumption
$a \ge 1-\delta$ and $b \le \delta$.

Recall H\"older's inequality: for
any nonnegative $x_k$, $y_k$, $k\in m$, 
\[\sum_{k=1}^mx_ky_k\le\Bigl(\sum_{k=1}^mx_k^t\Bigr)^{1\over t}
	\Bigl(\sum_{k=1}^my_k^\frac{t}{t-1}\Bigr)^{t-1\over t}.\]
We first treat the sum over $\omega\in T$. 
\begin{align*}
	\sum_{\omega\in T}\prod_{u\in A}P_u(\omega)^{\frac{1}{t}}
	&=\sum_{\omega\in T}P_v(\omega)^{\frac{1}{t}}\prod_{u\ne v}P_u(\omega)^{\frac{1}{t}}\\
	&\le\Bigl(\sum_{\omega\in T}P_v(\omega)\Bigr)^{\frac{1}{t}}
	\Bigl(\sum_{\omega\in T}\prod_{u\ne v}P_u(\omega)^{\frac{1}{t-1}}\Bigr)^{\frac{t-1}{t}}\\
	&\le\Bigl(\sum_{\omega\in T}P_v(\omega)\Bigr)^{\frac{1}{t}}
	\Bigl(\sum_{\omega\in T}\frac{1}{t-1}\sum_{u\ne v}P_u(\omega)\Bigr)^{\frac{t-1}{t}}\\
	&=a^{\frac{1}{t}}b^{\frac{t-1}{t}},
\end{align*}
where we first used H\"older's inequality and then the arithmetic-geometric
mean inequality.
We do the same steps for the sum over $\omega \not \in T$ to find
\[\sum_{\omega\not\in T}\prod_{u\in A}P_u(\omega)^{\frac{1}{t}}
	\le(1-a)^{\frac{1}{t}}(1-b)^{\frac{t-1}{t}}.\]
Hence,
\[\sum_{\omega\in\Omega}\prod_{u\in A}P_u(\omega)^{\frac{1}{t}}\le
	a^{\frac{1}{t}}b^{\frac{t-1}{t}}+(1-a)^{\frac{1}{t}}(1-b)^{\frac{t-1}{t}}.\]

Let $g(a,b,x) = a^{x}b^{1-x}+(1-a)^{x}(1-b)^{1-x}$. We will show that under the
constraints $a\ge1-\delta$ and $b\le\delta$ where $\delta<1/2$, for
any fixed $0\le x\le1/2$, $g(a,b,x)$ is maximized for $a=1-\delta$
and $b=\delta$. The partial derivatives for $g(a,b,x)$ with respect to $a$ and
$b$ are

\[g_a(a,b,x)=x[a^{x-1}b^{1-x}-(1-a)^{x-1}(1-b)^{1-x}]
	=x\Bigl[\Bigl(\frac{b}{a}\Bigr)^{1-x}-\Bigl(\frac{1-b}{1-a}\Bigr)^{1-x}\Bigr]\]
\[g_b(a,b,x)=(1-x)[a^{x}b^{-x}-(1-a)^{x}(1-b)^{-x}]
	=(1-x)\Bigl[\Bigl(\frac{b}{a}\Bigr)^{-x}-\Bigl(\frac{1-b}{1-a}\Bigr)^{-x}\Bigr]\]
Under our constraints, $\frac{b}{a} < 1 < \frac{1-b}{1-a}$, $1-x > 0$ and
$-x \le 0$, thus, $g_a(a,b,x)<0$ and $g_b(a,b,x)\ge0$ for any such $a$, $b$, and
$x$.  This implies that for any fixed $b$, $g(a,b,x)$ is maximized when
$a=1-\delta$ and similarly for any fixed $a$, $g(a,b,x)$ is maximized when 
$b=\delta$. Therefore, for all $a$, $b$, and $0\le x\le1$, $g(a,b,x)\le
g(1-\delta,\delta,x)$.

For $0\le x\le1/2$, let
\[f(\delta,x)=1-g(1-\delta,\delta,x)=
	1-(1-\delta)^x\delta^{1-x}-\delta^x(1-\delta)^{1-x}.\] 
Since $f(\delta,x)$ is convex for any constant $0\le\delta\le1$, 
\[f(\delta,x)\ge{f(\delta,1/2)-f(\delta,0)\over1/2-0}\cdot x=
	2\bigl(1-2\sqrt{\delta(1-\delta)}\bigr)\cdot x.\]
\end{proof}

\section{An application}\label{section5}
In this section we show how to derive a lower bound for the informational
complexity of the $\AND_k$ function. Define a collection of distributions
$\eta=\{\zeta_1,\dots,\zeta_k\}$, where, for each $j\in[k]$, $\zeta_j$ is the
uniform distribution over $I_j$.  Recall that $I_j\subseteq I=\{0,1\}^k$ for $j\in[k]$ is the set of 
$k$-bitstrings whose $j$-th bit is 0. 
We prove the following lower bound on the $\delta$-error informational
complexity of $\AND_k$ with respect to $\eta$.

\medskip\noindent
{\bf Remark.} The choice of the collection $\eta$ is not arbitrary, but is
suggested by the way the direct-sum theorem for informational complexity is
proved in \cite{byjks04} for the two-party setting. In particular, two
properties of $\eta$ seem crucial for such a purpose. First, for each $j\in[k]$,
$\zeta_j$ is a distribution with support only on the zeroes of $\AND_k$. 
Second, under any $\zeta_j$, the input of each player is independent of any
other input.

\begin{thm}
	\[\mathrm{IC}_{\eta,\delta}(\AND_k)\ge\log e\cdot
		\bigl(1-2\sqrt{\delta(1-\delta)}\bigr)\cdot\frac{1}{k^2\m34^{k-1}}.\]
\end{thm}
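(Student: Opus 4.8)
The plan is to fix an arbitrary $\delta$-error NOF protocol $\Pi$ for $\AND_k$, bound its information cost averaged over $\eta$ from below by the right-hand side, and then take the minimum over protocols. Throughout, for $z\in I=\{0,1\}^k$ let $P_z$ denote the distribution of the transcript $\Pi(z)$ on the transcript space $\Omega$, and write $P=\{P_z\}_{z\in I}$. All the analytic work is already packaged in Lemmas~\ref{lemma.mutual2hel}--\ref{lemma.lb}, so the argument is mostly a matter of chaining them together correctly.

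The first step converts mutual information into Hellinger volume. Under $\zeta_j$ the input $\mbf{Z}$ is uniform over the $2^{k-1}$-element set $I_j$ and, given $\mbf{Z}=z$, the transcript has law $P_z$; hence Lemma~\ref{lemma.mutual2hel} (applied with $m=|I_j|=2^{k-1}$, so that $h_m$ becomes $\psi(I_j;P)$) gives $\II{\mbf{Z}}{\Pi(\mbf{Z})}\ge\psi(I_j;P)/(2^{k-1}\ln 2)$ for every $j\in[k]$. Averaging over $j$, then invoking the tree bound of Lemma~\ref{lemma.tree}, and finally the cut-and-paste identity $\psi(I_{OZ})=\psi(I_{EZ})$ of Lemma~\ref{lemma.cutandpaste}, one gets
\[\E_j\bigl[\II{\mbf{Z}}{\Pi(\mbf{Z})}\bigr]\ge\frac{1}{k\cdot 2^{k-1}\ln 2}\sum_{j=1}^k\psi(I_j)\ge\frac{\psi(I_{OZ})}{k\cdot 2^{k-1}\ln 2\cdot 2^{\lceil\log k\rceil}}=\frac{\psi(I_{EZ})}{k\cdot 2^{k-1}\ln 2\cdot 2^{\lceil\log k\rceil}}.\]

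The crux is lower bounding $\psi(I_{EZ})$ via Lemma~\ref{lemma.lb}, and this is exactly where correctness of $\Pi$ on \emph{all} inputs is used. The set $I_{EZ}$ has size $t=2^{k-1}\ge 2$ (here $k\ge 2$) and contains the all-ones string $1^k$, which is the unique $1$-input of $\AND_k$, while every other string in $I_{EZ}$ has a positive, even number of zeros and is therefore a $0$-input. Taking $T\subseteq\Omega$ to be the set of transcripts on which $\Pi$ outputs $1$, the $\delta$-error guarantee yields $P_{1^k}(T)\ge 1-\delta$ and $P_u(T)\le\delta$ for every $u\in I_{EZ}$ with $u\ne 1^k$, so Lemma~\ref{lemma.lb} gives $\psi(I_{EZ})\ge(2-4\sqrt{\delta(1-\delta)})/2^{k-1}$. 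Note the cut-and-paste step is genuinely needed: $I_{OZ}$ consists entirely of $0$-inputs and so has no distinguished element $v$, so Lemma~\ref{lemma.lb} cannot be applied to it directly.

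Finally I would substitute this bound and simplify, using $2^{\lceil\log k\rceil}\le 2k$ and $1/\ln 2=\log e$:
\[\E_j\bigl[\II{\mbf{Z}}{\Pi(\mbf{Z})}\bigr]\ge\frac{2-4\sqrt{\delta(1-\delta)}}{k\cdot 4^{k-1}\ln 2\cdot 2^{\lceil\log k\rceil}}\ge\log e\cdot\bigl(1-2\sqrt{\delta(1-\delta)}\bigr)\cdot\frac{1}{k^2 4^{k-1}},\]
and since $\Pi$ was an arbitrary $\delta$-error protocol this gives the claimed lower bound on $\IC_{\eta,\delta}(\AND_k)$. I do not expect a genuinely hard new step; the only point requiring care is bookkeeping — keeping a single protocol $\Pi$ (equivalently, a single collection $P$) throughout, from the averaged mutual-information bound through Lemma~\ref{lemma.tree} to the application of Lemma~\ref{lemma.lb}. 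This is precisely why the averaged (conditional) formulation of information cost, rather than a separately optimal protocol for each $j$, is the convenient one here.
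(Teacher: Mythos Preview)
Your proposal is correct and follows essentially the same route as the paper: apply Lemma~\ref{lemma.mutual2hel} with $m=2^{k-1}$ for each $\zeta_j$, average over $j$, use Lemma~\ref{lemma.tree} to pass to $\psi(I_{OZ})$, then Lemma~\ref{lemma.cutandpaste} to switch to $\psi(I_{EZ})$, and finally Lemma~\ref{lemma.lb} with $A=I_{EZ}$, $v=1^k$, $t=2^{k-1}$ and $T$ the set of accepting transcripts, simplifying via $2^{\lceil\log k\rceil}\le 2k$. Your added remarks --- why the cut-and-paste step is genuinely needed (since $I_{OZ}$ contains no $1$-input) and why a single protocol must be fixed throughout --- are accurate and make the exposition a bit more explicit than the paper's, but the argument is the same.
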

\begin{proof}
Let $\Pi$ be a $\delta$-error protocol for $\AND_k$.
By Lemma~\ref{lemma.mutual2hel} we have that, 
\[I(Z;\Pi(Z))\ge\frac{1}{2^{k-1}\ln2}\cdot\psi(I_j),\]
where $Z\sim\zeta_j$, for any $j\in[k]$, 
Thus, by the definition of $\IC_{\eta,\delta}(\AND_k)$,
\[\IC_{\eta,\delta}(\AND_k)\ge\sum_{j=1}^k
	\frac{1}{k\m32^{k-1}\ln2}\cdot\psi(I_j).\]
Applying in turn Lemmas \ref{lemma.tree}, \ref{lemma.cutandpaste}, and
\ref{lemma.lb} we have
\[\IC_{\eta,\delta}(\AND_k)
	>\frac{\psi(I_{OZ})}{k^2\m32^{k}\ln2}
	=\frac{\psi(I_{EZ})}{k^2\m32^{k}\ln2}
	\ge{\log e\cdot\bigl(1-2\sqrt{\delta(1-\delta)}\bigr)}\cdot\frac{1}{k^2\m34^{k-1}}
,\]
where the application of Lemma~\ref{lemma.lb} is with $A=I_{EZ}$, $t=2^{k-1}$,
$T$ the set of transcripts that output ``1'', and $v$ the all-one vector in $I$.
\end{proof}

It is of interest to note, that
\[\mathrm{IC}_{\eta,\delta}(\AND_k)\le\frac{1}{k}\cdot H(1/2^{k-1})=O(1/2^k).\]
This is achieved by the following protocol. The players, one by one, reveal with
one bit whether they see a 0 or not. The communication ends with the first
player that sees a 0. The amount of information revealed is $H(1/2^{k-1})$
under $\zeta_1$ and $0$ otherwise.

\section{Difficulties in proving a direct-sum theorem}\label{section6}
There seem to be fundamental difficulties in proving a direct-sum theorem on
informational complexity in the NOF model.
The reader familiar with the techniques of 
Bar-Yossef, Jayram, Kumar \& Sivakumar \cite{byjks04}, 
should recall that in the first
part of the method a direct-sum for informational complexity of disjointness is
proved. In particular, it is shown that with respect to suitable collections of
distributions $\eta$ and $\zeta$ for $\DISJ_{n,2}$ and $\AND_2$ respectively,
the information cost of $\DISJ_{n,2}$ is at least $n$ times the informational
complexity of $\AND_2\!:\;\mathrm{IC}_{\eta,\delta}(\DISJ_{n,2})\ge 
n\cdot\mathrm{IC}_{\zeta,\delta}(\AND_2).$
This is achieved via a simulation argument in which the players, to decide the
$\AND_2$ function, use a protocol for disjointness by
substituting their inputs in a special copy of $\AND_2$ and using their random
bits to generate the inputs for the rest $n-1$ copies of $\AND_2$.
In the NOF model the players can no longer perform such a simulation. This is
because, with private random bits, they cannot agree on what the input on the
rest of the copies should be without additional communication. This problem can
be overcome if we think of their
random bits as being not private, but on each player's forehead, just like the
input. 
However, In such a case, although the direct-sum theorem holds, it is useless.
This is because $\mathrm{IC}_{\zeta,\delta}(\AND_k)=0$, as is shown by the
protocol we describe in the next paragraph. 

We describe a protocol that computes $\AND_k$ on every input, with one-sided
error. It has the property that for any distribution over the zeroes of
$\AND_k$, no player learns
anything about his own input. We give the details for three players. Let $x_1$,
$x_2$, $x_3$ denote
the input. Each player has two random bits on his forehead, denoted $a_1$,
$a_2$, $a_3$ and $b_1$, $b_2$, $b_3$. The first player does the following: if
$x_2=x_3=1$, he sends $a_2\oplus a_3$, otherwise he sends $a_2\oplus b_3$. The
other two players behave analogously.  If the {\rm XOR} of the three messages is
`0', they answer `1', otherwise they know that the answer is `0'.  Notice that
any player learns nothing from another player's message. This is because the
one-bit message is {\rm XOR}-ed with one of his own random bits, which he cannot
see.

\bibliographystyle{alpha}
\bibliography{nof}	
\end{document}